\newtheorem{lemma}{\rm\textbf{Lemma}}
\newenvironment{proof}{{\noindent\it Proof:}\quad}{\hfill\rule{1.5mm}{3mm} }
\def\BibTeX{{\rm B\kern-.05em{\sc i\kern-.025em b}\kern-.08em
    T\kern-.1667em\lower.7ex\hbox{E}\kern-.125emX}}
\begin{document}

\title{Distributed Layered Grant-Free \\Non-Orthogonal Multiple Access \\for Massive MTC}


\author{\IEEEauthorblockN{Hui Jiang, Qimei Cui, Yu Gu, Xiaoqi Qin, Xuefei Zhang, Xiaofeng Tao}
\IEEEauthorblockA{National Engineering Laboratory for Mobile Network Technologies\\
Beijing University of Posts and Telecommunications, Beijing, 100876, China \\
Email:
{\{jianghui, cuiqimei\}@bupt.edu.cn}
}}
\maketitle
\begin{abstract}
Grant-free transmission is considered as a promising technology to support sporadic data transmission in massive machine-type communications (mMTC). Due to the distributed manner, high collision probability is an inherent drawback of grant-free access techniques. Non-orthogonal multiple access (NOMA) is expected to be used in uplink grant-free transmission, multiplying connection opportunities by exploiting power domain resources. However, it is usually applied for coordinated transmissions where the base station performs coordination with full channel state information, which is not suitable for grant-free techniques. In this paper, we propose a novel distributed layered grant-free NOMA framework. Under this framework, we divide the cell into different layers based on predetermined inter-layer received power difference. A distributed layered grant-free NOMA based hybrid transmission scheme is proposed to reduce collision probability. Moreover, we derive the closed-form expression of connection throughput. A joint access control and NOMA layer selection (JACNLS) algorithm is proposed to solve the connection throughput optimization problem. The numerical and simulation results reveal that, when the system is overloaded, our proposed scheme outperforms the grant-free-only scheme by three orders of magnitude in terms of expected connection throughput and outperforms coordinated OMA transmission schemes by 31.25\% with only 0.0189\% signaling overhead of the latter.

\end{abstract}

\begin{IEEEkeywords}
Grant-free, distributed layered NOMA, hybrid transmission, massive MTC, IoT.
\end{IEEEkeywords}

\IEEEpeerreviewmaketitle
\section{Introduction}

Unlike human-to-human communications, which involve a small number of devices with high-rate and large-sized data, massive MTC (mMTC) is generally characterized by massive MTC devices (MTCDs) with sporadic transmission and low computational capability. Thus channel access is a serious challenge for mMTC. In conventional grant-based communication networks, users access the network via a four-step random access (RA) procedure. In the scenario of mMTC, it is inefficient to establish dedicated bearers for data transmission, since signaling overheads for coordination are proportional to the number of devices. 
Therefore, grant-free transmission is a promising enabler to mMTC. 

Grant-free is gaining a lot of attention recently, as it allows devices to transmit without waiting for the base station (BS) to grant them radio resources \cite{grant-free1}. In \cite{refer1}\cite{refer2}, contention-based transmission technologies are proved to be prevailing. In \cite{b27}, uncoordinated access schemes are shown to be perfect for mMTC due to their low signaling overhead. Conventionally, slotted ALOHA \cite{ALOHA} is used for uplink grant-free, which is based on orthogonal multiple access (OMA). However, it seriously suffers from the nuisance of collision, which is due to contention based access by multiple devices. Fortunately, by exploiting power domain, non-orthogonal multiple access (NOMA) enables multiple users to share one time-frequency resource. Therefore, NOMA based grant-free can support a significantly increased connections \cite{b32}\cite{b26}. However, most of existing studies on NOMA focus on coordination with known channel state information (CSI) at both transmitter and receiver sides, to optimize subchannel and power allocation \cite{b29} \cite{b30}, which is not suitable for grant-free transmissions; there are also several works regarding grant-free with code domain NOMA \cite{b34}\cite{b41}, which employ various compressive sensing (CS) techniques for multi-user detection (MUD). The main limitation of these works is that prior information about user activity is required, resulting in high computational complexity on the receiver side. Moreover, the proposed approaches are only suitable for the cases where user activity is time-related and sparse.

To address these challenges, we adopt a distributed NOMA, power division multiple access\cite{b16}, and propose a low-complexity distributed layered grant-free NOMA framework and a hybrid transmission scheme, accordingly. Under this framework, the inherent drawback of grant-free random access, high collision probability due to its distributed manner, can be greatly alleviated. The key of the proposed framework is dividing the cell into different layers based on predetermined inter-layer received power difference, thus, power domain NOMA can be used to drastically reduce the number of MTCDs that compete for grant-free transmission in each region. Instead of transmitting on the allocated subchannels with allocated transmission power,  MTCDs in each region can decide their own transmission power and subchannels for direct data transmission without the BS assistance. With predetermined received powers, the computation complexity of MUD on the receiver side is decreased. To further guarantee the connection throughput no matter what the system load is, we apply Enhanced Access Barring (EAB) mechanism \cite{b21} for adaptive congestion control of the framework. Based on the framework, a hybrid transmission scheme is also to improve the transmission success probability and reduce overhead, significantly.

In this paper, we propose a novel distributed layered grant-free NOMA framework and a hybrid transmission scheme. To efficiently characterize the system performance, we derive a closed-form analytic expression for connection throughput, based on the number of NOMA power levels, available subchannels, and the number of contenders. The accuracy of the expression is validated through Monte Carlo simulation. A joint access control and NOMA layer selection (JACNLS) algorithm is proposed to solve the connection throughput optimization problem. Numerical analysis and simulation results prove that our proposed scheme outperforms the grant-free-only scheme by three orders of magnitude in terms of expected connection throughput, and outperforms coordinated transmission schemes by 31.25\% with only 0.0189\% signaling overhead.

The main contributions of this paper are as follows.
\begin{itemize}
\item We propose a novel distributed layered grant-free NOMA framework. The cell is divided into layers based on predetermined inter-layer received power difference. MTCDs transmit distributedly according to the proposed hybrid transmission scheme. Note that the parameter for dividing cell is obtained by solving a throughput optimization problem.

\item We propose a distributed layered grant-free NOMA based hybrid transmission scheme. Moreover we derive a closed-form analytic expression for connection throughput, which can effectively characterize the system performance. 
\item We formulate a connection throughput optimization problem, and propose a JACNLS algorithm to find the optimal parameters. 
\end{itemize}

The rest of the paper is organized as follow. In Section II, we introduce the system model. In Section III, we propose a novel distributed layered grant-free NOMA framework and analyze the connection throughput. Section IV we solve a connection throughput optimization problem. In Section V, we present numerical and simulation results and the paper is concluded in Section VI.
\section{System Model}
Consider a cellular network as shown in Fig. \ref{fig1}, with a single BS located in the origin serving $Q$ MTCDs in the area. We assume that the MTCDs are uniformly distributed in a circle of radius $D$.  All MTCDs share a bandwidth of $B_{T}$ for uplink data transmissions. The available system bandwidth is divided into frequency resource blocks, each of bandwidth $B$. Thus, the total number of frequency resource blocks is given as $M=B_{T}/B$. We evaluate the performance of the system within the time slot period $T_{\rm{P}}$. 

In this paper, we use \emph{Connection Opportunity} (CO) to represent a connection resource. The number of COs in a time slot is determined only by available subchannels in OMA systems. Due to the limited frequency spectrum, the number of COs is inadequate for massive grant-free access. As seen in Fig. \ref{fig1}(a), if two users in a cell access the BS with the same subchannel in a time slot, collision happens. 

\begin{figure}[htbp]
\centerline{
\includegraphics[width=0.48\textwidth]{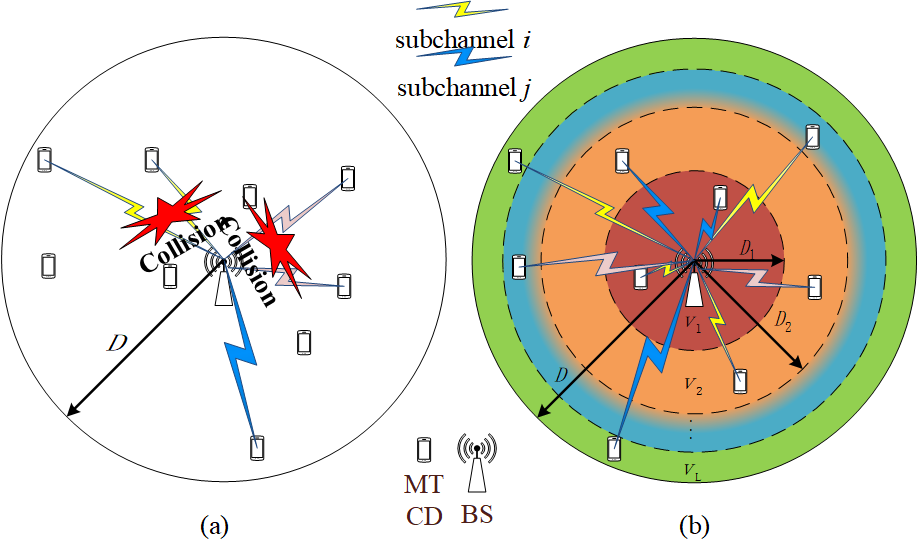}}
\caption{(a) Grant-free transmission; (b) principle diagram of a distributed layered NOMA, where different colored circle and rings indicate the layers of different aiming received powers, and darker color represents bigger received power.}

\label{fig1}
\end{figure}
A distributed NOMA concept is applied. Suppose that there are \emph{L} predetermined aiming received power levels that are denoted as $v_{1}>v_{2}>...>v_{L}>0$, where ${v_l} = \Gamma {(\Gamma  + 1)^{L - l}}$($l=1,2,...,L$), and $\Gamma$ is the target signal to interference-plus-noise ratio (SINR). The single-BS cell is divided into $L$ concentric layers, MTCDs in different layers have different aiming received power, as shown in fig .\ref{fig1}(b), the insider layer denotes the smaller received power. MTCDs decide their transmission power according to locations and CSI. For example, for an MTCD $k$, $d_{k}$ is the distance to the BS, if it belongs to set ${{\rm K}_l} = \left\{ {k|{D_{l - 1}} < {d_k} \le {D_l}} \right\}$, where ${D_0} = 0$, ${D_l} = D\sqrt {\frac{l}{L}}$, then its aiming received power level is $v_{l}$. Based on its knowledge of channel gain of different sunchannel $i$, refered as ${g_{i,k}}$ ($i=1,...,M$), its transmission power is decided as ${P_k} = \frac{{{v_l}}}{{\mathop {\max }\limits_i {g_{i,k}}}}$.

The average transmission power for an MTCD using the distributed NOMA when $M \ge 2$ is upper-bounded as\cite{b16}
\begin{equation}
\begin{array}{l}
{P_{{\rm{ave}}}}  \le \frac{{\min \left\{ {2\ln 2,\frac{M}{{M - 1}}} \right\}}}{L}\sum\limits_{l = 1}^L {\frac{{\Gamma {{\left( {\Gamma  + 1} \right)}^{L - l}}}}{{{A_0}{{\left( {D\sqrt {\frac{l}{L}} } \right)}^{ - \beta }}}}} 
\end{array},
\label{eq1}
\end{equation}
where $\beta$ is the path loss exponent, and $A_{0}$ is a constant related to antenna gains.

\section{Design and Analysis of the Distributed Layered Grant-Free NOMA Framework}
\subsection{Framework Design}
Considering the mass number and sporadic transmission characteristics of mMTC, we propose a distributed layered grant-free NOMA framework and a hybrid transmission scheme for uplink mMTC, which is shown in Fig. \ref{fig1}(b). The cell is divided into different layers based on predetermined inter-layer received power difference. The BS does not perform any power and subchannel allocation coordination. An EAB mechanism can effectively control the number of contenders and distribute the traffic over time, the principle is to let the BS broadcast a parameter to all MTCDs. When an MTCD tries to transmit, it generates a random number between 0 and 1, and compares the generated number with the EAB access control parameter ${p_{\rm{E}}}$ (${{\rm{0 < }}{p_{{\rm{E}}}} \le {\rm{1}}}$) broadcast by BS. If the number is less than ${p_{\rm{E}}}$, the MTCD proceeds to transmit. Otherwise, it needs to backoff temporarily. For the BS, to broadcast the signaling containing: \textcircled{1} EAB access control parameter ${p_{\rm{E}}}$; \textcircled{2} the number of NOMA layer parameter $L$, which is used to divide the cell, an optimization operation is performed with the information about current number of MTCDs competing for grant-free transmission and spectrum resources; For MTCDs, after with fast retrial\cite{b19} and infinite retransmissions assumed, a hybrid transmission scheme is proposed. Instead of transmitting on the allocated subchannels with allocated transmission power, MTCDs in each NOMA layer can distributedly decide their own transmission power and subchannels for direct data transmission based on their location and CSI. The detailed steps are as follows.

\begin{enumerate}[step 1]
\item Subsequent to the system initialization, the BS performs an optimization operation to get EAB access control parameter ${p_{\rm{E}}}$ and the number of NOMA layers $L$ in each time slot, and broadcasts them.
\item Once there is a packet, the MTCD generates a random number $p$, if $p$ is smaller than $p_{{\rm{E}}}$, it calculates its transmission power independently based on its location, CSI (mainly channel gain) and $L$, then go to step 3; otherwise, it waits for next available time slot;
\item Each MTCD transmits with the calculated transmission power for direct data transmission;
\item If \emph{DL ACK} signal, which denotes the data is decoded by the receiver successfully, is received by the MTCD within the prescribed period, the transmission is successful; otherwise, it tries in next available time slot. 
\end{enumerate}

This framework significantly reduces computational complexity at the BS, as channel estimations of massive MTCDs is unnecessary. Furthermore, with the predetermined received powers, the computation complexity of MUD on the receiver side is greatly decreased.

The signaling procedure of connection oriented and the proposed hybrid transmission schemes is shown in fig. 2, when an MTCD has bursty small data. The former causes heavy signaling overhead up to 220 bytes \cite{b31} even if for a tens-of-bytes packet, however, the latter can reduce the signaling overhead to  only a few bytes for broadcasting signal.

\begin{figure}[htbp]
\centerline{
\includegraphics[width=0.45\textwidth]{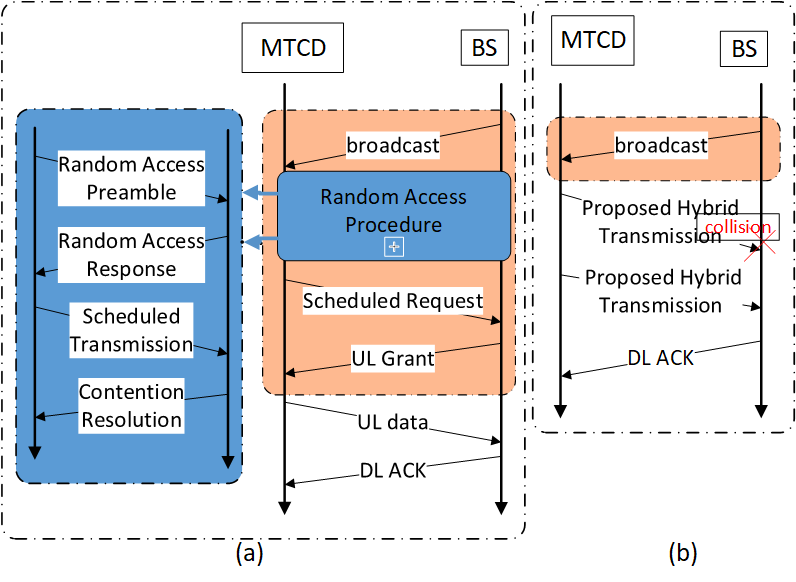}}
\caption{Comparison between (a) connection oriented communication and (b) proposed hybrid access and data transmission.}
\label{fig2}
\end{figure}

\subsection{Analysis of Connection Throughput}
We define a concept of \emph{connection throughput} to represent the theoretical average number of MTCDs which can successfully transmit without collision, given the resources provided in a time slot. 
Suppose there are $M$ subchannels and $C_{T}$ active MTCDs, under the distributed layered grant-free NOMA framework, it is equal to the situation that there are $L$ layers, each layer has $M$ COs. To simplify mathematics, we assume that the number of active MTCDs in each layer is the same (i.e. $C = \frac{{{C_T}}}{L}$), since the MTCDs are uniformly distributed. The assumption can be easily extended to the general scenario with different number of MTCDs for each layer. For a subchannel, if there are multiple MTCDs who choose the same power level, the signals cannot be decoded, which is called power collision. The power collision at each power level in NOMA is not an independent event, that is, if it happens at level $l$, it does not affect the signals at levels $1,...,l-1$, but the signals at levels $l + 1,...,L$ cannot be decoded, even though these higher level signals have no power collision themselves. Considering receiver decoding sequence, that is, the bigger received power, the earlier decoding, and error propagation, the connection probability of $l$-th layer is determined as follows.
\begin{lemma}
\emph{The connection probability of $l$-th layer for given $C$ contenders and $M$ subchannels under the proposed distributed layered grant-free NOMA framework, denoted by $P_l^{{\rm{Con}}}$, is determined as}
\begin{equation}
P_l^{{\rm{Con}}} = {\left(1 - \frac{1}{M}\right)^{Cl - 1}}{\left(1+\frac{C}{{M - 1}}\right)^{l - 1}}.
\label{eq2}
\end{equation}
\end{lemma}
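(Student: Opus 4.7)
The plan is to condition on the subchannel chosen by a tagged MTCD in layer $l$ and isolate the two independent events that together guarantee its packet is decoded. Fix an MTCD $k^\ast$ in layer $l$ and suppose it has chosen some subchannel $i$. Since SIC proceeds strongest-first and a power collision at any layer blocks decoding of all weaker layers on the same subchannel, the event that $k^\ast$ is decoded splits cleanly into (E1) no other layer-$l$ MTCD picked subchannel $i$, and (E2) for every higher-power layer $k = 1, \dots, l-1$, at most one MTCD of that layer picked subchannel $i$, so the stronger signals can be cancelled before $k^\ast$'s turn.

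Next I would compute each factor by exploiting the uniform independent subchannel choices across MTCDs, and hence across layers. For (E1), I get $\left(1-1/M\right)^{C-1}$ directly. For (E2), the per-layer probability that at most one of $C$ MTCDs lands on subchannel $i$ is $\left(1-1/M\right)^{C} + C\,(1/M)\left(1-1/M\right)^{C-1}$; a short rearrangement rewrites this as $\left(1-1/M\right)^{C}\left(1 + C/(M-1)\right)$. Multiplying Pr(E1) by $l-1$ independent copies of the layer-$k$ factor from (E2) and collecting exponents of $(1-1/M)$ as $(C-1)+C(l-1) = Cl-1$ yields the stated closed form.

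The main conceptual obstacle is the bookkeeping for SIC: it is tempting to require \emph{exactly one} user per higher layer on subchannel $i$, but in fact an empty higher layer also lets SIC advance past that layer, so the correct event in (E2) is \emph{at most one}. Recognising this is what produces the $\left(1 + C/(M-1)\right)^{l-1}$ factor rather than a strictly smaller expression. The remaining work is routine: verifying independence of the per-layer events (immediate from the per-MTCD independent uniform subchannel draws and the disjointness of the layers) and the short algebraic rearrangement of the $(M-1+C)/M$ bracket into the $1 + C/(M-1)$ form.
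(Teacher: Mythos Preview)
Your argument is correct and arrives at the same formula, but it is organized differently from the paper's proof. The paper does not factor the event directly; instead it introduces $p_{c,m,l}^{\rm succ}$, the probability that a tagged MTCD in layer $l$ uniquely selects subchannel $m$ \emph{and} is decoded, and derives a one-step recursion
\[
p_{c,m,l}^{\rm succ} \;=\; p_{c,m,l}^{\rm sel}\Bigl(\sum_{c} p_{c,m,l-1}^{\rm succ} + p_{m,l-1}^{\rm nosel}\cdot(\text{previous ratio})\Bigr)
\;=\; A\cdot p_{c,m,l-1}^{\rm succ},
\]
with $A=\frac{C+M-1}{M}\left(1-\tfrac{1}{M}\right)^{C-1}$, which it then solves as a geometric sequence and sums over $m$. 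Your approach short-circuits this recursion by observing at the outset that the events in different higher-power layers are independent (disjoint populations, independent uniform subchannel draws), so the success probability is simply the product of the layer-$l$ uniqueness factor $(1-1/M)^{C-1}$ with $l-1$ identical ``at most one on $i$'' factors $(1-1/M)^{C}\bigl(1+\tfrac{C}{M-1}\bigr)$. The two routes are equivalent: your per-layer factor is exactly the paper's common ratio $A$, and your direct product is what the paper recovers after unrolling its recursion. Your decomposition is shorter and makes the role of the ``empty higher layer is also fine'' observation more transparent; the paper's recursion, on the other hand, makes explicit how the SIC success condition at layer $l$ nests the condition at layer $l-1$.
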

\begin{proof}
Let $p_{c,m,l}^{{\rm{sel}}}$ denotes the probability that a random MTCD $c$ in the $l$-th layer uniquely select subchannel $m$, it is calculated as $p_{c,m,l}^{{\rm{sel}}} = \frac{1}{M}{(1 - \frac{1}{M})^{C - 1}}$; $p_{m,l}^{{\rm{nosel}}}$ denotes the probability that no MTCD in the $l$-th layer selects subchannel $m$, calculated as $p_{m,l}^{\rm{nosel}} = {(1 - \frac{1}{M})^C}$; $p_{c,m,l}^{\rm{succ}}$ denotes the probability that a random MTCD $c$ in the $l$-th layer uniquely selects subchannel $m$ and decode successfully at the receiver side. 

For a random MTCD belongs to set $\rm{K}_{1}$, the event that it uniquely selects subchannel $m$ and decode successfully at the receiver side is independent of other MTCDs with signals at levels $2,3,..,L$, since it is decoded first at the receiver side. So
\begin{equation}
p_{c,m,1}^{\rm{succ}} = p_{c,m,1}^{\rm{sel}}.
\end{equation}

For an MTCD belongs to set $\rm{K}_{2}$, the event that it is decoded successfully at the receiver side happens only when it is the only one who selects subchannel $m$ in set $\rm{K}_{2}$, and the signal over subchannel $m$ in set $\rm{K}_{1}$ can be decoded successfully or no MTCD in in set $\rm{K}_{1}$ selects subchannel $m$. So 
\begin{equation}
p_{c,m,2}^{\rm{succ}} = p_{c,m,2}^{\rm{sel}}\left( {\sum\limits_{c = 1}^C {p_{c,m,1}^{\rm{succ}}} {\rm{ + }}p_{m,1}^{\rm{nosel}}} \right),
\end{equation}
then 
\begin{equation}
\sum\limits_{c = 1}^C {p_{c,m,1}^{\rm{succ}}} {\rm{ + }}p_{m,1}^{\rm{nosel}} = \frac{{p_{c,m,2}^{\rm{succ}}}}{{p_{c,m,2}^{\rm{sel}}}}.
\end{equation}
And for an MTCD belongs to set $\rm{K}_{3}$, 
\begin{equation}
\begin{array}{l}
p_{c,m,3}^{\rm{succ}} = p_{c,m,3}^{\rm{sel}}\left(\sum\limits_{c = 1}^C {p_{c,m,2}^{\rm{succ}}} {\rm{ + }}p_{m,2}^{\rm{nosel}}\left(\sum\limits_{c = 1}^C {p_{c,m,1}^{\rm{succ}}} {\rm{ + }}p_{m,1}^{\rm{nosel}}\right)\right)\\
\;\;\;\;\;\;\;\; = p_{c,m,3}^{{\rm{sel}}}\left(\sum\limits_{c = 1}^C {p_{c,m,2}^{\rm{succ}}} {\rm{ + }}p_{m,2}^{\rm{nosel}}\cdot\frac{{p_{c,m,2}^{\rm{succ}}}}{{p_{c,m,2}^{\rm{sel}}}}\right)\\
\;\;\;\;\;\;\;\; = p_{c,m,3}^{\rm{sel}}\cdot p_{c,m,2}^{\rm{succ}}\left(C{\rm{ + }}\frac{p_{m,2}^{\rm{nosel}}}{{p_{c,m,2}^{\rm{sel}}}}\right).
\end{array}
\end{equation}
Then we can find that
\begin{equation}
p_{c,m,l}^{\rm{succ}} = p_{c,m,l}^{\rm{sel}}\cdot p_{c,m,l - 1}^{\rm{succ}}(C{\rm{ + }}\frac{p_{m,l - 1}^{\rm{nosel}}}{p_{c,m,l - 1}^{\rm{sel}}}),\;l = 2,3,...,L,
\end{equation}
which is a recursive expression of $p_{c,m,l}^{\rm{succ}}$, that is 
\begin{equation}
\begin{array}{l}
p_{c,m,l}^{\rm{succ}} = \frac{1}{M}{\left(\frac{M-1}{M}\right)^{C - 1}}\left(C + \frac{{{{\left(1 - \frac{1}{M}\right)}^C}}}{{\frac{1}{M}{{\left(1 - \frac{1}{M}\right)}^{C - 1}}}}\right)p_{c,m,l-1}^{\rm{succ}}
\\
\;\;\;\;\;\;\;\;\; = A \cdot p_{c,m,l - 1}^{\rm{succ}}, 
\end{array}
\end{equation}
It can be written as $Q(l) = A \cdot Q(l - 1)$, where $A = \frac{{C + M - 1}}{M}{\left(1 - \frac{1}{M}\right)^{C - 1}}$. This is the form of isometric series, the formula of general term is $Q(l) = Q(1){A^{l - 1}}$, where $Q(1) = p_{c,m,1}^{\rm{succ}} = \frac{1}{M}{\left(1 - \frac{1}{M}\right)^{C - 1}}$. Then 
\begin{equation}
\begin{array}{l}
p_{c,m,l}^{\rm{succ}} = \frac{1}{M}{\left(1 - \frac{1}{M}\right)^{C - 1}}{\left(\frac{{C + M - 1}}{M}{\left(1 - \frac{1}{M}\right)^{C - 1}}\right)^{l - 1}}\\
\;\;\;\;\;\;\;\;\; = \frac{{{{\left(M - 1\right)}^{\left(C - 1\right)l}}{{\left(C + M - 1\right)}^{l - 1}}}}{{{M^{Cl}}}}.
\end{array}
\end{equation}
So the probability that a random MTCD $c$ in the $l$-th layer can be decoded successfully at the receiver side, that is, the connection probability of $l$-th layer is determined as
\begin{equation}
\begin{array}{l}
P_l^{{\rm{Con}}} = \sum\limits_{m = 1}^M {p_{c,m,l}^{\rm{succ}}}  = \frac{{{{\left(M - 1\right)}^{\left(C - 1\right)l}}{{\left(C + M - 1\right)}^{l - 1}}}}{{{M^{Cl - 1}}}}\\
\;\;\;\;\;\;\;\;\;\;\;\; = {\left(1 - \frac{1}{M}\right)^{Cl - 1}}{\left(1 + \frac{C}{{M - 1}}\right)^{l - 1}}.
\end{array}
\end{equation}
\end{proof}

So the connection throughput of $l$-th layer is determined as $T_l^{{\rm{Con}}} = C\cdot{P_l^{{\rm{Con}}}}$.
\subsection{Validation of Connection Throughput}
In this section, we will present that the proposed concept of connection throughput can be used as a performance metric to evaluate the system performance under different amount of MTCDs. Fig. \ref{fig4} presents the numerical and simulation results of connection throughput of each layer MTCDs in Eq. \eqref{eq2}. The upper results are connection throughput when the system is not overloaded ($C_{T}=200$), while the lower results present the low efficiency when the system is overloaded ($C_{T}=500$). We can find that, the bigger the number of MTCDs is, the lower the connection probability is, which is in line with the actual situation. Therefore, our defined connection throughput can reflect the performance of the system accurately, so can be used as a metric of system performance. We will use numerical results to evaluate the connection throughput performance of the system in Section V.

\begin{figure}[htbp]
\centerline{
\includegraphics[width=0.4\textwidth]{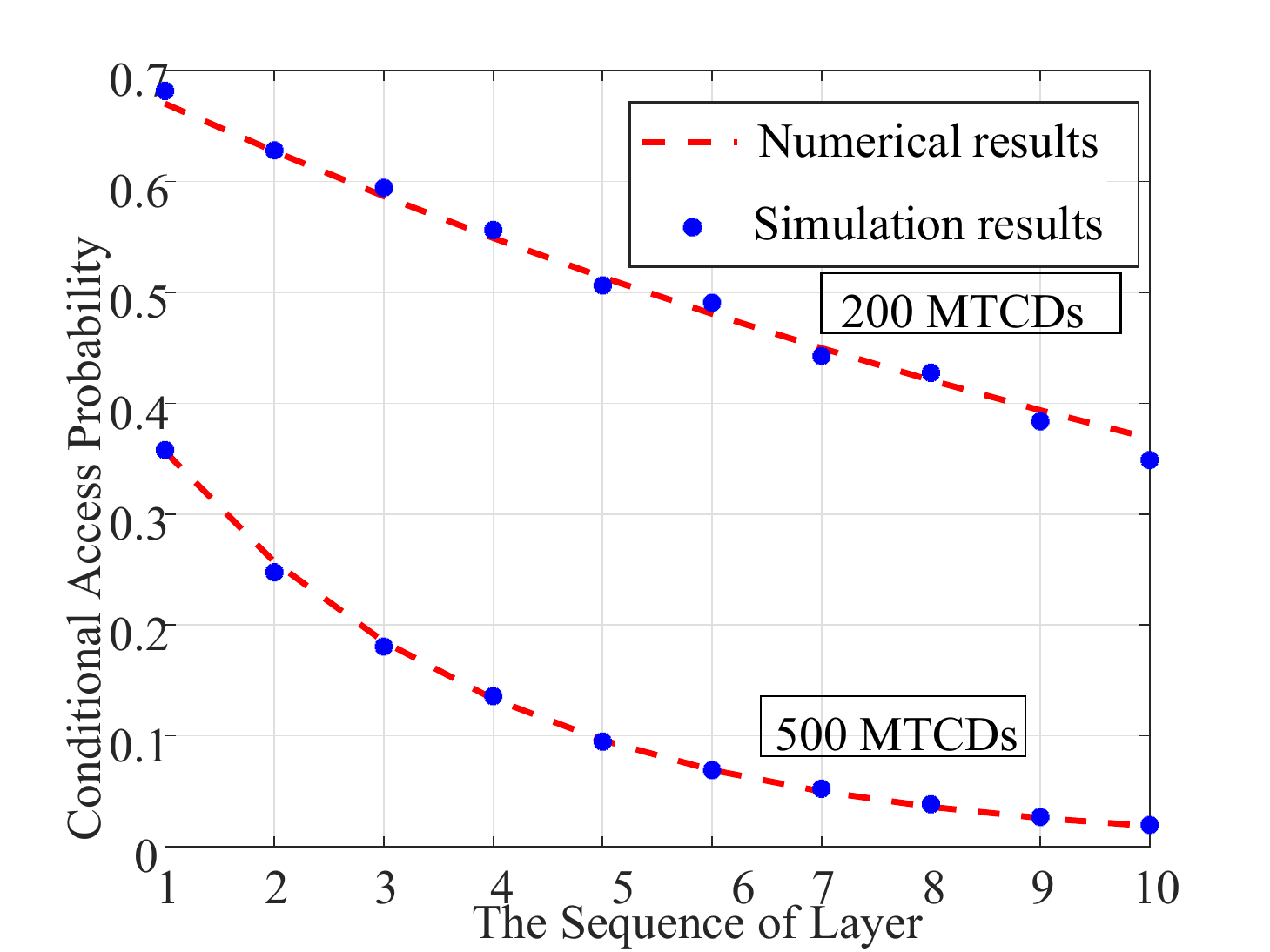}}
\caption{Connection probability simulation and numerical results.}
\label{fig4}
\end{figure}

\section{Connection Throughput Optimization}
\subsection{Problem Formulation}
In this section, we study a connection throughput maximization problem for the distributed layered grant-free NOMA framework, with average access delay constraint and maximum allowed average transmission power constraint. Assuming $Q$ is the total number of MTCDs before EAB access control, and access control parameter is ${p_{\rm{E}}}$, then $C_T = {p_{\rm{E}}} \cdot Q$ MTCDs are allowed to attempt access. With the proposed scheme, the number of contenders in each layer is $C = \frac{{{p_{\rm{E}}} \cdot Q}}{L}$. By applying fast retrial and infinite retransmission mechanism, the average required transmission time follows Bernoulli trial, as given by $T_{\rm{ave}} = \sum\limits_{r = 0}^\infty  {\left( {r + 1} \right){p_{a}}{{\left( {1 - {p_{a}}} \right)}^r}}  = \frac{1}{{{p_{a}}}}$, where $r$ denotes the retransmission time, ${p_{a}}$ denotes the probability that an MTCD joins in the competition in a time slot when the EAB mechanism is adopted. More specifically, ${p_{a}} = {p_{\rm{E}}} \cdot {p_{\rm{succ}}}$, 
where ${p_{\rm{succ}}} = \frac{1}{L}\sum\limits_{l = 1}^L {p_{l}^{\rm{succ}}} $, and $p_{l}^{\rm{succ}}$ is the probability that an MTCD of $l$-th layer successfully gains access when it joins in the competition, which is given as Eq. \eqref{eq2}. Then the average access delay of MTCDs can be determined as
\begin{equation}
{D_{{\rm{ave}}}} = {T_{{\rm{P}}}} \cdot T_{\rm{ave}} = \frac{{{T_{{\rm{P}}}}}}{{{p_a}}},
\end{equation}



To avoid infinite retransmissions and over-limit transmission power, average access delay constraint and maximum allowed average transmission power constraint should be imposed while maximizing connection throughput. 
Then the problem of maximizing the connection throughput can be formulated as
\begin{equation}
\begin{array}{l}
\mathop {\max }\limits_{L,{p_{\rm{E}}}}  \;T^{{\rm{Con}}} = \sum\limits_{l = 1}^L T_l^{{\rm{Con}}} \\
\rm{s.t.}\;\;\;C1:L \in {{\rm{L}}_{\rm{T}}} = \left\{ {{\rm{1,2,}}...{\rm{,}}{L_{\max }}} \right\},\\
\;\;\;\;\;\;\;\;C2:{D_{{\rm{ave}}}} \le {D_{{\rm{req}}}},
\end{array}
\end{equation}
where ${\rm{L_T}}$ is the set of acceptable number of NOMA power level. The maximum acceptable number of power levels ${L_{\max }}$ is determined by ${L_{\max }} = \max \{ {L_u},\max \{ L|{P_{{\rm{ave}}}} < {P_{\max }},L \in Z\} \} $, where ${P_{{\rm{ave}}}}$ denotes the average transmission power for a random MTCD using distributed NOMA scheme, which is calculated in Eq. \eqref{eq1}, and ${P_{\max }}$ denotes the maximum allowed average transmission power, $L_{u}$ denotes the receiver complexity allowed maximum power level, and $D_{\rm{req}}$ is the minimum delay requirement. $C1$ and $C2$ denote the constraints of maximum allowed average transmission power and average access delay, respectively. 

\subsection{Joint Access Control and NOMA Power Level Selection Algorithm}

Notice that connection throughput $T^{{\rm{Con}}}$ is an increasing function of number of NOMA power level $L$, then $L = {L_{\max }}$ is always chosen. The feasible set $\rm P$ of ${p_{\rm{E}}}$ to meet the average delay requirement can be derived as
\begin{equation}
{\rm P} = \left\{ {{p_{{\rm{E}}}}\left| {\begin{array}{*{20}{c}}
{\frac{{{T_{{\rm{P}}}}}}{p_a} \le {D_{{\rm{req}}}},}\\
{0 < {p_{{\rm{E}}}} \le 1}
\end{array}} \right.} \right\},
\end{equation}
where $p_a={\frac{{{p_{{\rm{E}}}}}}{{{L_{\max }}}}\sum\limits_{l = 1}^L {{{(1 - \frac{1}{M})}^{\frac{{Q \cdot {p_{{\rm{E}}}}}}{{{L_{\max }}}}l - 1}}{{(1 + \frac{{\frac{{Q \cdot {p_{{\rm{E}}}}}}{{{L_{\max }}}}}}{{M - 1}})}^{l - 1}}} }$. Finally, the optimal ${p_{\rm{E}}}$ is given by
\begin{equation}
{p_{\rm{E}}}^* = \arg \;\mathop {\max }\limits_{{p_{\rm{E}}} \in {\rm{P}}} {T^{{\rm{Con}}}}.
\end{equation}
If $\rm P$ is an empty set, there is no optimal solution, which denotes that, even the maximum acceptable number of power levels and optimal access control parameter are adopted, the maximized the connection throughput is still inadequate for such a big number of contenders. In this case, we can determine ${p_{\rm{E}}}^*$ as 
\begin{equation}
{p_{{\rm{E}}}}^* = \arg \;\mathop {\max }\limits_{{\rm{0 < }}{p_{{\rm{E}}}} \le {\rm{1}}} {T^{{\rm{Con}}}},
\end{equation}

which targets the maximum connection throughput but exceeds the delay requirement. However, our JACNLS algorithm gives the best-effort scheme for the given contender number, which can shorten the average delay to the maximum degree. The details are shown in Algorithm \ref{alg:A}, which find the optimal parameters to obtain the optimal connection throughput.

\begin{algorithm}[htb]
\caption{JACNLS Algorithm}
\label{alg:A}
\begin{algorithmic}[1] 
\REQUIRE ~~\\ 
The number of contenders, $Q$;\\
The number of subchannels available, $M$;
\ENSURE ~~\\ 
Access control parameter, $p_{E}$;\\
NOMA power level, $L$;

\STATE {calculate \\${L_{\max }} = \max \{ {L_u},\max \{ L|{P_{{\rm{ave}}}} < {P_{\max }},L \in Z\} \} $}
\IF{${L_{\max }} \leq L_{u} $}   
\STATE $L=L_{\max}$   
\ELSE   
\STATE $L=L_{u}$   
\ENDIF 
\FOR{each $p_{{\rm{E}},i} \in (0,1]$}  
\STATE calculate $p_{i}^{\rm{succ}}$ and $T_{i}^{{\rm{Con}}}$;
\ENDFOR 
\STATE {set ${p_{\rm{E}}} = \arg \;\mathop {\max }\limits_{{p_{{\rm{E}},i}} \in (0,1]} T_{i}^{{\rm{Con}}}$}
\IF{the selected ${{{p_{{\rm{E}}}} \cdot {p^{{\rm{succ}}}}}} \geq {T_{{\rm{P}}}}/{D_{{\rm{req}}}}$}   
\STATE blue ${T^{{\rm{Con}}}} = \max \{ {T_{i}^{{\rm{Con}}}}\} $   
\ELSE   
\STATE red ${T^{{\rm{Con}}}} = \max \{ {T_{i}^{{\rm{Con}}}}\} $   
\ENDIF 
\RETURN $p_{\rm{E}}$ and $L$; 
\end{algorithmic}
\end{algorithm}

\section{Numerical Analysis and Performance Evaluation}
\begin{table}[htbp]
\caption{Notation Summary}
\begin{center}
\begin{tabular}{|c|c|c|}
\hline
Notation & Description & Value\\ \hline
$B_{T}$ & Total bandwidth of the system in Hz & 180\\ \hline
$B$ & The bandwidth of subcarrier in Hz & 3.75 \\ \hline
$M$ & The number of subchannels & 48 \\ \hline
$\Gamma$ &  Target SINR in dB & $6$\\ \hline
${D_{\rm{req}}}$ &  Minimum delay requirement in ms & $1$ \\ \hline
${P_{\max }}$ &  Maximum transmission power in dBm & $18$\\ \hline
$T_{\rm{P}}$ & time slot period in ms & $0.2$\\ \hline
$L_{u}$ & Allowed maximum power level & $5$ \\ \hline
\end{tabular}
\end{center}
\label{tab1}
\end{table}
%
In this section, we present numerical 
results to evaluate the performance of proposed hybrid transmission scheme, with JACNLS algorithm to maximize the connection throughput. 
The list of key mathematical symbols used in this paper are summarized in Table I. For the path loss exponent $\beta$, we assume that $\beta$ = 3.8. In addition, $D=1$ and $A_0 = 1$ used in Eq. \eqref{eq1} are assumed for normalization purpose. The maximum allowed average transmission power ${P_{\max }}$ is set to 18dBm since the maximum transmission power is set to 23dBm in general\cite{b33}.

When the number of subchannels $M \ge 2$, the average transmission power for a random MTCD using the proposed hybrid transmission scheme, a distributed NOMA scheme, is upper-bounded as Eq. \eqref{eq1}. Other schemes such as NOMA with random subchannel and power level selection (random NOMA), coordinated OMA and so on, are presented for comparison purposes. For random NOMA scheme, if the subchannel and power level are randomly selected, the average transmission power would be ${P_{\rm{ave\_2}}}  = \frac{1}{L}\sum\limits_{l = 1}^L {{v_l}{\rm E}\left[ {\frac{1}{{{g_{i,k}}}}} \right]}$, 
where ${g_{i,k}}$ is the channel power gain from MTCD \emph{k} to the BS over subchannel $i$. $L_{\max\_2}$ is determined by ${L_{\max\_2}}=\max\{{L_u},\max\{L|{P_{{\rm{ave{\_2}}}}}<{P_{\max }},L\in Z\}\}$. 

From Fig. 4(a), the average transmission power of the proposed hybrid transmission scheme grows slower than that of random NOMA as $L$ increases, denoting the maximum acceptable number of NOMA layers of the proposed scheme is much bigger than that of random NOMA. Since the maximum allowed average transmission power ${P_{\max }}$ is 18dBm, then the maximum acceptable number of power levels ${L_{\max }}$ in hybrid transmission and random NOMA are 5 and 3, respectively; Fig. 4(b) shows the average transmission power for different numbers of subchannels. As expected, the average transmission upper-bound power of our scheme decreases with $M$ increasing. However, the average transmission power of random NOMA does not depend on $M$. That is, a large $M$ can help improve energy efficiency of the proposed scheme. This encourages the use of narrowband transmission if limited spectrum resources are given. So, for mMTC, we set $M=48$; When hybrid transmission scheme is applied, the required signaling overhead is from broadcasting signals containing NOMA power level $L$ and optimal access control parameter $p_{\rm{E}}$, about 2 bytes, while with conventional connection setup procedures, transmitting a small sized data packet needs 220 bytes. Fig. 4(c) shows that our scheme will drastically reduce the signaling overhead to 0.0189\% comparing to coordinated schemes when 48 MTCDs access successfully.
\begin{figure}[htbp]
\centerline{
\includegraphics[width=0.5\textwidth]{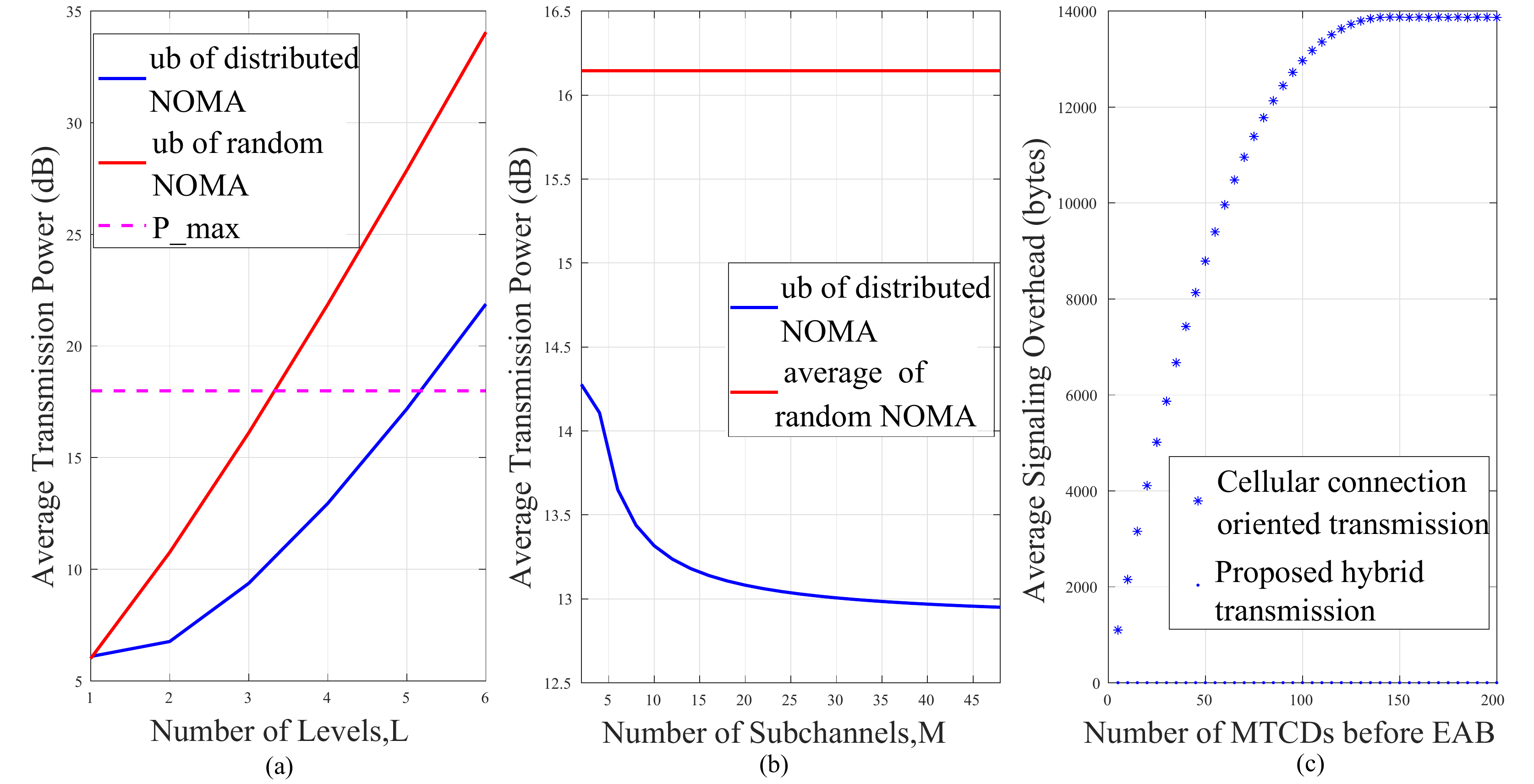}}
\caption{(a) Average transmission power upper-bound (ub) for different values of $L$, $P_{\rm{max}}$ represents the maximum average transmission power; (b) average transmission power of random NOMA and ub of distributed NOMA for different values of $M$; (c) signaling overhead comparison between cellular connection oriented transmission and proposed hybrid transmission for different values of $Q$.}
\label{fig3}
\end{figure}

\begin{figure}[htbp]
\centerline{
\includegraphics[width=0.4\textwidth]{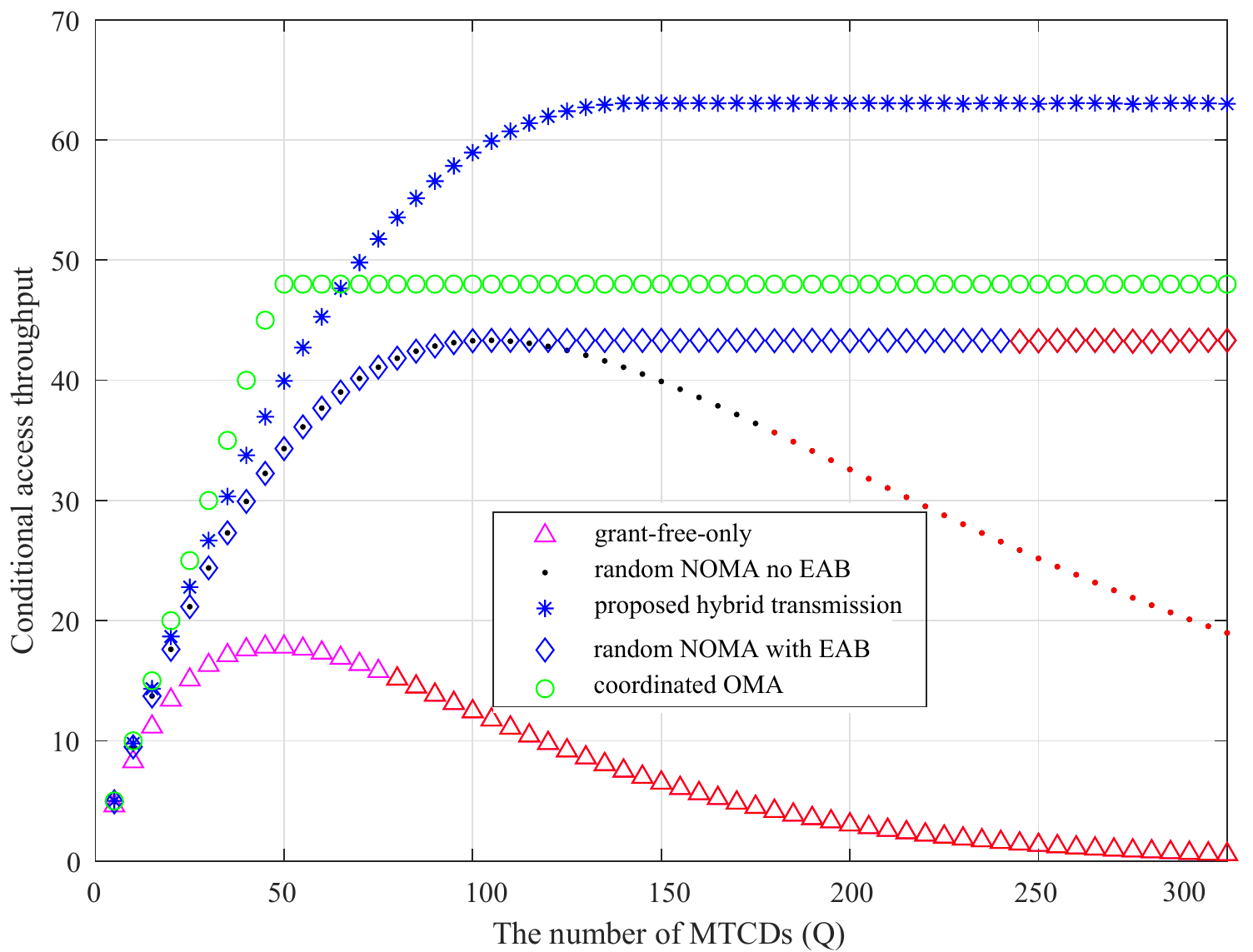}}
\caption{Connection throughput comparison between bybrid transmission, random NOMA with/without EAB, grant-free-only scheme, for different values of $Q$.}
\label{fig5}
\end{figure}

As shown in Fig. 5, color non-red signs represent the maximum connection throughput is obtained under the constraint of required average access delay, while the color red signs don't, but still obtain its maximum. It is obvious that the performance of connection throughput based on our proposed scheme is better than that any other schemes listed, especially when $Q$ is big. 

In Fig. 5, when the number of contenders is 300, our proposed hybrid transmission scheme outperforms high-complexity\&overhead coordinated OMA, random NOMA with EAB, random NOMA without EAB and grant-free-only schemes by about 31.25\%, 46.6\%, 231.6\% and 6,200\% in terms of connection throughput, while satisfying average access delay requirement. 

\section{Conclusion}
In this work, to support more connectivity in uplink grant-free mMTC, we proposed a novel distributed layered grant-free NOMA framework based on distributed NOMA. The proposed hybrid transmission scheme can significantly reduce signaling overhead to 0.0189\% comparing to coordinated schemes. 
In addition, we have derived a closed-form analytic expression for the expected connection throughput of the transmission scheme. A JACNLS algorithm has been proposed to maximize the connection throughput and to resolve the collision problem. 
The numerical analysis and simulation results reveal that when the system is overloaded, our proposed scheme outperforms the grant-free-only scheme by three orders of magnitude, and outperforms high-complexity and high-overhead coordinated OMA transmission schemes by 31.25\%, in terms of expected connection throughput. Therefore, the proposed distributed layered grant-free NOMA framework and according scheme are suitable for grant-free mMTC when the spectrum resources are limited.

\section*{Acknowledgment}
The work was supported in part by the National Nature Science Foundation of China Project under Grant 61471058, in part by the Hong Kong, Macao and Taiwan Science and Technology Cooperation Projects under Grant 2016YFE0122900, in part by the Beijing Science and Technology Commission Foundation under Grant 201702005, in part by the National Science and Technology Major Project of China under Grant 2017ZX03001004, in part by the Key National Science Foundation of China under Grant 61461136002, and in part by the 111 Project of China under Grant B16006.

\end{document}